\DeclareRobustCommand{\reFLect}{\textit{re\kern-0.07em F\kern-0.07emL\kern-0.29em\raisebox{0.56ex}{ect}}\xspace}
\DeclareRobustCommand{\ReFLect}{\textit{Re\kern-0.07em F\kern-0.07emL\kern-0.29em\raisebox{0.56ex}{ect}}\xspace}
\newcommand{\tree}[2]{{\renewcommand{\arraystretch}{1.1}%
{\begin{tabular}{c}#1 \\ \hline #2\end{tabular}}}}
\providecommand{\type}[1]{\ensuremath{\mathord{\textsl{#1}}}}
\renewcommand{\type}[1]{\ensuremath{\mathord{\textsl{#1}}}}
\newcommand{\term}{\type{term}}
\newcommand{\unit}{\type{unit}}
\newcommand{\bool}{\type{bool}}
\newcommand{\sterm}{\textsl{\hbox{\scriptsize term}}}
\newcommand{\sbool}{\textsl{\hbox{\scriptsize bool}}}
\newcommand{\pterm}[1]{(#1)\type{term}}
\newcommand{\lquote}{\mathopen{\langle}}
\newcommand{\rquote}{\mathclose{\rangle}}
\newcommand{\quot}[1]{\lquote #1 \rquote}
\newcommand{\const}[1]{\ensuremath{\mathord{\mathsf{#1}}}}
\def\sem#1{{[\mkern-8mu[}\,#1\,{]\mkern-8mu]\mkern2mu}}
\def\set#1{\mathit{#1}}
\begin{document}


\title{On the Semantics of ReFLect\\
       as a Basis for a Reflective Theorem Prover} 

\date{\today}

\author{Tom Melham\inst{1} \and 
        Raphael Cohn\inst{2} \and 
        Ian Childs\inst{1}}

\institute{University of Oxford \and The Rockefeller University}


\maketitle

\thispagestyle{plain}

\begin{abstract}
This paper explores the semantics of a combinatory fragment of \reFLect,
the $\lambda$-calculus underlying a functional language used by Intel
Corporation for hardware design and verification.  \ReFLect is similar to
ML, but has a primitive data type whose elements are the abstract syntax
trees of \reFLect expressions themselves.  Following the LCF paradigm, this
is intended to serve as the object language of a higher-order logic theorem
prover for specification and reasoning---but one in which object- and
meta-languages are unified. The aim is to intermix program evaluation and
logical deduction through reflection mechanisms. We identify some
difficulties with the semantics of \reFLect as currently defined, and
propose a minimal modification of the type system that avoids these
problems.
\end{abstract}

\section{Introduction}
	
\ReFLect is a strongly typed, functional programming language, similar to
ML, with certain reflection features for applications in hardware design
and verification~\cite{Grundy:2006:RFL}. The language was designed by
researchers at Intel Corporation and forms the basis for Intel's Forte
formal verification environment~\cite{Seger:2005:IEE}.  Forte has been used
at Intel to attack many challenging verification problems for real-world
processor designs; one impressive achievement is the verification,
without simulation, of much of the Core i7 processor execution
cluster~\cite{DBLP:conf/cav/KaivolaGNTWPSTFRN09}.

Reflection is supported through a primitive data type, \term, whose values
are \reFLect abstract syntax trees.  A \emph{quotation} $\quot{e}$ has type
\term\ and denotes the syntax tree of the term $e$. In Forte, this type is
the basis for a higher-order logic theorem prover that is similar to
HOL~\cite{Gordon:1993:IHA}. In systems like HOL, the higher-order logic
`object language' is built on top of the $\lambda$-calculus, following
Church~\cite{Church:1940:AFS}.  The syntax of the logic is represented by
an algebraic data-type in a functional programming `meta-language'.
\ReFLect is designed to unify the object and meta languages in this kind of
enterprise. The aim is for the \emph{same} $\lambda$-calculus to be the
core of both logic and meta-language.  Moreover, it should be possible to
move freely between \emph{evaluation} in the interpreter and
\emph{deduction} in the theorem prover. For example, the reduction rules of
\reFLect align with inference rules of its logic, so proof can be done just
by evaluation: to prove a theorem $\quot{P}$, simply strip off the quotes
and check that the interpreter evaluates $P$ to `true'.  
		
The logical soundness of this depends on having the right formal semantics
to justify these reflection rules. There exists an operational
(i.e.~reduction) semantics for full \reFLect, including quotation
evaluation~\cite{Grundy:2006:RFL}, but for logic a denotational semantics
is needed. Krsti\'{c} and Matthews have published
one~\cite{DBLP:conf/ppdp/KrsticM04}, but this omits the crucial reflection
constructs that bridge logic and evaulation.

In this paper, we analyse the semantics of a simplified system,
\textit{Combinatory} \ReFLect, that has some core features of full \reFLect
and includes the reflection constructs missing from the semantics
in~\cite{DBLP:conf/ppdp/KrsticM04}. This gives a simple setting to
investigate the problem while side-stepping the technicalities of variable
binding in the full language.  We find that even our variable-free language
does not support the semantics we need, and that any reasonable logic built
on \reFLect will be inconsistent. Our proposed solution is a modification
of the language's type system that stratifies the semantics enough to avoid
the problem.

\section{Combinators with Reflection}\label{cr}

\emph{Combinatory} \ReFLect (CR) is a variable-free system of combinatory
logic that includes essentially the same reflection constructs, quotations,
and typing system as full \reFLect.  Indeed, CR can be viewed as a
sublanguage of full \reFLect.  The syntax of CR includes explicitly-typed
versions of the combinators $\const{I}$, $\const{K}$ and $\const{S}$, along
with reflective operators $\const{value}$, $\const{app}$ and
$\const{lift}$. CR also supports quotation of any term $e$ in the language,
written $\quot{e}$.

The types of CR terms, $\set{Ty}$, are defined inductively by
\begin{displaymath} 
\sigma\;\; {:}{=} \;\; \bool \; \mid \; \unit \; \mid \;  \term  \;  \mid \; \sigma_1 \rightarrow \sigma_2
\end{displaymath} 

\noindent And the terms of CR, $\set{Exp}$, are given by
\begin{equation}\label{syntax}  
   e \;\; {:}{=} \;\; \const{I}_\sigma \;  
       	\mid  \; \const{K}_{\sigma,\tau} \;
        \mid  \; \const{S}_{\sigma,\tau,\upsilon}  \;
       	\mid  \; \const{value}_\sigma \;
	\mid  \; \const{lift} \;
	\mid \; \const{app} \;
       	\mid  \; e_1\; e_2 \;
       	\mid  \; \quot{e}
\end{equation} 

\noindent where $\sigma, \tau$ and $\upsilon$ range over types.  We refer
to terms of the form `$e_1 \; e_2$' as \emph{applications}, and terms of
the form `$\quot{e}$' as \emph{quotations}.

A quotation is an object-language phrase of CR that, semantically, denotes
the abstract syntax tree of the term inside the quotes. A quotation
$\quot{e}$ is semantically different from the term $e$. Unquoted terms that
evaluate to the same value, for example $(\const{S}\:\const{K})\:\const{K}\;e$
and $\const{I}\;e$, are semantically equal. But
$\quot{(\const{S}\:\const{K})\:\const{K}\;e}$ and $\quot{\const{I}\;e}$ are
semantically distinct; they denote different syntax trees.

To explain the intended semantics of $\const{value}_\sigma$,
$\const{lift}$, and $\const{app}$, we suppose, just for a moment, that
normal forms exist in some reduction system for CR. Of course we have not
\emph{established} this yet. The intended semantics, then, is that
$\const{value}_\sigma\:\quot{e}$ should denote the normal form or
evaluation of $e$ of type $\sigma$. This is obviously problematic if $e$
contains free variables, so the operational semantics of \reFLect allows
reduction of $\const{value}_\sigma\:\quot{e}$ only for closed
$e$~\cite[p. 187]{Grundy:2006:RFL}.  The $\const{lift}$ function reifies
values into representative syntax and, in full \reFLect, applies only to
closed terms that possess a canonical `name' for their value~\cite[\S
  8.2]{Grundy:2006:RFL}.  For example, both $\const{lift}\;2$ and
$\const{lift}\;(1+1)$ reduce to $\quot{2}$. In CR, we adopt an even more
conservative semantics for \const{lift}: it applies only to terms of type
$\term$, and we can reduce only terms of the form
$\const{lift}\;\quot{e}$. It seems obviously harmless to suppose we might
have a function that takes $\quot{e}$ to $\quot{\quot{e}}$.  Finally, in
what follows we will need to apply one $\term$ to another; this is native
in full \reFLect, but in CR we have an ad-hoc combinator, $\const{app}$.

\begin{figure}[t]
\begin{center}
\tree{}{$\const{I}_\sigma: \sigma \rightarrow \sigma$}
  \quad
\tree{}{$\const{K}_{\sigma,\tau} : \sigma \rightarrow \tau \rightarrow \sigma$}
  \quad
\tree{}{$\const{S}_{\sigma,\tau,\upsilon}:(\sigma \rightarrow \tau \rightarrow \upsilon) \rightarrow (\sigma \rightarrow \tau) \rightarrow \sigma \rightarrow \upsilon$}\\
\tree{}{$\const{value}_\sigma: \term \rightarrow \sigma$}
  \quad
\tree{}{$\const{lift}: \term \rightarrow \term$} 
  \quad
\tree{}{$\const{app} : \term \rightarrow \term \rightarrow \term$}\\[2.5mm]
\tree{$e_1 : \sigma \rightarrow \tau \quad e_2: \sigma$}{$e_1 \; e_2 : \tau$}
  \qquad
\tree{$e : \sigma$}{$\quot{e} : \term$}
\end{center}
\caption{Typing rules for CR.}\label{typing}
\end{figure}

Figure \ref{typing} shows the type system of CR.  The judgement $e :
\sigma$ means that the term $e$ has type $\sigma$.  Note that the type
system does not rule out the formation of a term
`$\const{value}_{\sigma}\:\quot{e}$' where $e$ does not have type
$\sigma$. A check for this kind of type mismatch is done only at runtime in
CR, during reduction. In Figure~\ref{reductions} we show part of the
reduction system for CR.  (We omit only the rules that close $\Rightarrow$ under
reflexivity, transitivity, and congruence.)
	
\begin{figure}[t]
\begin{displaymath}
\begin{array}[]{@{}c@{}}
\const{I}_\sigma \; e \Rightarrow e \qquad\quad
\const{K}_{\sigma,\tau} \; e_1 \; e_2 \Rightarrow e_1 \qquad\quad
\const{S}_{\sigma,\tau,\upsilon} \; e_1 \; e_2 \; e_3 \Rightarrow e_1 \; e_3 \; (e_2 \; e_3) \\[3mm]
\textrm{if $e : \sigma$, then\ }\const{value}_\sigma \; \quot{e} \Rightarrow e \qquad\quad
 \const{lift} \; \quot{s} \Rightarrow \quot{\quot{s}} \\[3mm]
\textrm{if $e_1 \; e_2$ is well typed, then\ }\const{app} \; \quot{e_1}\; \quot{e_2} \Rightarrow \quot{e_1 \; e_2} \\[3mm]
\end{array}
\end{displaymath}
\caption{Some Reductions in CR.}\label{reductions}
\end{figure}

\subsection{Denotational Semantics}

The reflective operators $\const{value}_\sigma$ and $\const{lift}$ are key
components of the \reFLect language. They serve as the essential bridge
betwen deductive logic and program evaluation in Forte. It is these
operators, omitted from the semantics in~\cite{DBLP:conf/ppdp/KrsticM04},
that make the denotational semantics of \reFLect difficult, as we now show.

Suppose we have a semantic function, $\sem{-} \in \set{Exp} \rightarrow
\set{U}$ that maps each term to an element of some universe $\set{U}$.  A
quotation $\quot{e}$ represents the abstract syntax tree of its contents
$e$. So the formal denotation of a quoted term $\sem{\quot{e}}$, where $e$
is a closed term, can just be the \reFLect term $e$ itself, an element of
$\set{Exp} \subseteq \set{U}$. And, indeed, this is how it is defined
in~\cite{DBLP:conf/ppdp/KrsticM04}.  With the usual semantics of function
application, we would therefore expect for closed $e : \sigma$ that
\begin{displaymath}
\sem{\const{value}_\sigma \; \quot{e}} = 
\sem{\const{value}_\sigma}\;\sem{\quot{e}} = 
\sem{\const{value}_\sigma}\; e = \sem{e}
\end{displaymath}

\noindent So the action of the $\const{value}_\sigma$ function seems to
coincide with part of $\sem{-}$ itself.

It has been shown that core \reFLect, without the reflective functions, is
confluent and normalizing~\cite{DBLP:conf/ppdp/KrsticM04}. The authors
of~\cite{DBLP:conf/ppdp/KrsticM04} used this normalizing result as a
pre-requisite for proving the soundness of their denotational semantics. We
now show that adding the reflective functions $\const{lift}$ and
$\const{value}_\sigma$ destroys this property.
	
\begin{theorem}\label{confluence}
Combinatory \reFLect is not strongly normalizing.
\end{theorem}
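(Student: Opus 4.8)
The plan is to construct a single, explicitly typed closed term of Combinatory reFLect that admits an infinite reduction sequence. Strong normalization fails as soon as we can build a term that reduces, in finitely many steps, to a term containing itself as a subterm (or more precisely, to a term from which the same reduction pattern can be re-applied indefinitely). Since the pure combinators $\const{S}$, $\const{K}$, $\const{I}$ are simply typed and hence strongly normalizing on their own, the non-termination must come from the interaction of the reflective operators with quotation. The key observation driving the construction is the chain of semantic coincidences noted just above the theorem: $\const{value}_\sigma\:\quot{e} \Rightarrow e$ strips a quote, while $\const{lift}\:\quot{s} \Rightarrow \quot{\quot{s}}$ adds one, and $\const{app}$ lets us rebuild a compound quotation from its pieces. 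The interplay of stripping and building quotes is exactly what escapes the simply typed discipline, because the typing rule for $\const{value}_\sigma$ does not track the type of the term actually inside the quotes.

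First I would look for a term $D$ that behaves like a self-applicative combinator under these reductions --- the analogue of the untyped $\lambda$-calculus looping term $\Omega = (\lambda x.\,x\,x)(\lambda x.\,x\,x)$, or of the fixed-point combinator. The reflective operators give us the crucial ability that pure typed combinators lack: we can take a quoted term $\quot{e}$ of type $\term$, use $\const{value}_\sigma$ to coerce it to \emph{any} type $\sigma$ we please (the coercion is unchecked at the type level), in particular to a function type $\term \rightarrow \term$, and then apply the result back to a quotation. Because $\const{value}$ can assign a function type to something of type $\term$, we obtain genuine self-application across the quote barrier. Concretely, I would aim to build a closed term that reduces to $\const{value}_{\term\fun\term}\:\quot{d}$ applied to some quotation, where unfolding $\const{value}$ regenerates a term of the same shape, so that the reduction never reaches a normal form.

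The main technical work is choosing the quoted body so that the cycle closes: after applying $\const{value}_\sigma$ to strip the quote and obtaining some term $e$, applying $e$ to an argument and reducing should land us back at a term of the same form as the start. The combinator $\const{app}$ is the natural tool for reassembling quotations of applications, and $\const{lift}$ for manufacturing the extra layer of quotation needed to feed $\const{value}$ again; I expect the explicit type subscripts to be the fiddly part, since each use of $\const{value}_\sigma$ must be decorated with a type that makes the surrounding application well typed under the rules of Figure~\ref{typing} while still permitting the reduction rule ``if $e:\sigma$ then $\const{value}_\sigma\:\quot{e}\Rightarrow e$'' to fire. The hard part will be discharging the side condition on that rule --- ensuring that the term inside the quotes genuinely has the type $\sigma$ we annotated --- simultaneously at every step of the loop, since it is precisely the mismatch between the annotated type and the true type of the quoted contents that the type system tolerates but the reduction rule scrutinizes.

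Once a candidate looping term $L$ is fixed, the remainder is a routine verification: check that $L$ is closed and well typed, then exhibit the reduction sequence $L \Rightarrow L_1 \Rightarrow L_2 \Rightarrow \cdots$ and argue that it either returns to $L$ (giving an obvious infinite sequence) or strictly grows without ever becoming normal. I would present the reduction chain explicitly and verify each step against the rules in Figure~\ref{reductions}, paying particular attention to the congruence closure so that reductions buried inside quotations and applications are legitimate. The conclusion that Combinatory reFLect is not strongly normalizing follows immediately from the existence of this one infinite reduction sequence.
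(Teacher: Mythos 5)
You have the right key idea, and your plan does complete to a correct proof, but via a mechanism that is genuinely different from the paper's. Both arguments build an $\Omega$-like self-application $X\;\quot{X}$ with $X : \term \fun \term$, exploiting the fact that every quotation has the single type $\term$; the difference lies in how the cycle is closed. The paper keeps $\const{value}$ at the \emph{base} type $\term$ and uses $\const{app}$ and $\const{lift}$ to rebuild the quotation of the whole self-application: it defines $f = \const{S}_{\sterm,\sterm,\sterm}\;\const{app}\;\const{lift}$, so that $f\;\quot{e} \Rightarrow \quot{e\;\quot{e}}$, takes $g$ to be the combinator form of $\lambda x.\,\const{value}_{\sterm}(f\;x)$, and obtains the loop $g\;\quot{g} \Rightarrow \const{value}_{\sterm}(f\;\quot{g}) \Rightarrow \const{value}_{\sterm}\;\quot{g\;\quot{g}} \Rightarrow g\;\quot{g}$. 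Your mechanism instead uses $\const{value}$ at the \emph{function} type $\term \fun \term$ to cross the quote barrier directly, and it works: take $d = \const{S}_{\sterm,\sterm,\sterm}\;\const{value}_{\sterm\fun\sterm}\;\const{I}_{\sterm}$, which is closed and has type $\term \fun \term$; then $d\;\quot{d} \Rightarrow \const{value}_{\sterm\fun\sterm}\;\quot{d}\;(\const{I}_{\sterm}\;\quot{d}) \Rightarrow d\;(\const{I}_{\sterm}\;\quot{d}) \Rightarrow d\;\quot{d}$, where the middle step is licensed precisely because the quoted term $d$ genuinely has the annotated type $\term \fun \term$ --- the side condition you correctly identified as the crux (a mismatched annotation would typecheck but leave the redex stuck, so the ``coerce to any type we please'' reading only helps when the annotation is honest, as it is here). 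Your route is more economical: it needs neither $\const{app}$ nor $\const{lift}$, so it shows that already the fragment with only $\const{S}$, $\const{K}$, $\const{I}$ and $\const{value}$ fails to be strongly normalizing. What the paper's route buys is reuse: its diagonalization combinator $f$ is exactly the tool needed again in Theorem~\ref{diag}, where $\Psi$ is an arbitrary predicate and the value-at-function-type shortcut is unavailable. The one thing your write-up still owes is the explicit looping term and its reduction sequence; as exhibited above, that completion is routine, so the deficiency is one of execution rather than of ideas.
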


\begin{proof}
We proceed by constructing a term that has an infinite reduction
sequence. We first define a CR expression $f$ such that for all $e$ of type
$\term \rightarrow \term$, $f\;\quot{e} \Rightarrow \quot{e\:\quot{e}}$.
In full \reFLect, $f$ would be $\lambda x.\, \const{app} \;x
\;(\const{lift}\;x)$, which in combinators is
$\const{S}_{\sterm,\sterm,\sterm} \; \const{app}\; \const{lift}$. It is
easy to check that this term is typeable with type $\term \rightarrow
\term$ by the rules in Figure~\ref{typing}. We also have the reduction
sequence
\begin{displaymath}
f\;\quot{e} = 
\const{S} \; \const{app}\; \const{lift} \; \quot{e} 
\Rightarrow 
\const{app}\; \quot{e} \; (\const{lift} \; \quot{e})
\Rightarrow 
\const{app}\; \quot{e} \; \quot{\quot{e}}
\Rightarrow 
\quot{e\: \quot{e}}
\end{displaymath}

\noindent for any $e$ of type $\term \rightarrow \term$. (We omit type
subscripts for readability.)

Now, let the term $g$ of type $\term \rightarrow \term$ be
$\const{S}\;(\const{K}\;\const{value_{\sterm}})\;
(\const{S}\;(\const{K}\;f)\;\const{I})$; for readability we omit type
subscripts.  In full \reFLect, $g$ would be $\lambda x.\,
\const{value}_{\sterm} \; (f \; x)$.  We immediately see that $g\:\quot{g}$
is well typed. But now we have a circular reduction sequence: $g \;
\quot{g} \Rightarrow \const{value_{\sterm}}\; (f\;\quot{g}) \Rightarrow
\const{value_{\sterm}}\; (\quot{g\:\quot{g}}) \Rightarrow g\:\quot{g}$.

\end{proof}
	
\subsection{Indefinability in CR}

We now show that any higher order logic built on \reFLect and containing
the sublanguage CR will be inconsistent.  We will proceed by supposing we
do have such a logic, formulated in the usual way deriving from
Church~\cite{Church:1940:AFS}. That is, we suppose a logic has been defined
on top of CR in the same way that the HOL logic is defined on top of the
simply-typed $\lambda$-calculus~\cite{Gordon:1993:IHA}.  We then
demonstrate inconsistency by a construction inspired by Tarski's
Indefinability Theorem \cite{Tarski31}.
	
\begin{theorem}\label{diag}
In a conventional higher-order logic built on CR, for any
term $\Psi: \term \rightarrow \bool$, there is a typeable expression,
$\Gamma$, such that $\vdash \Gamma = \Psi \quot{\Gamma}$.
\end{theorem}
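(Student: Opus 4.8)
The plan is to prove this as a diagonal (fixed-point) lemma, exploiting the fact that the self-application gadget $f$ from the proof of Theorem~\ref{confluence} is exactly a diagonaliser: recall that $f = \const{S}_{\sterm,\sterm,\sterm}\,\const{app}\,\const{lift}$ satisfies $f\;\quot{e} \Rightarrow \quot{e\:\quot{e}}$ for every $e : \term \rightarrow \term$. This is the only real ingredient; the construction of $\Gamma$ then follows almost mechanically by composing $f$ with the given $\Psi$, in close analogy with the term $g$ built in the previous proof.

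First I would define the combinator $h = \const{S}\;(\const{K}\;\Psi)\;(\const{S}\;(\const{K}\;f)\;\const{I})$, which is the combinatory rendering of $\lambda x.\, \Psi\;(f\;x)$, i.e.\ the composition $\Psi \circ f$. This has the same shape as $g$ in Theorem~\ref{confluence}, with $\const{value}_\sterm$ replaced by $\Psi$. A routine check against the rules of Figure~\ref{typing}, choosing the subscripts $\const{K}_{\term \to \term,\,\term}$, $\const{I}_\term$, $\const{S}_{\term,\term,\term}$ on the inner occurrence and $\const{K}_{\term \to \bool,\,\term}$, $\const{S}_{\term,\term,\bool}$ on the outer, shows that $h : \term \rightarrow \bool$. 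Hence $\quot{h} : \term$, and the application $h\;\quot{h} : \bool$ is well typed; I set $\Gamma = h\;\quot{h}$.

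Next I would run the reduction of $\Gamma$, which mirrors the looping computation of $g\;\quot{g}$ but stops once $\Psi$ is exposed instead of cycling back:
\begin{displaymath}
\Gamma = h\;\quot{h} \Rightarrow \Psi\;(f\;\quot{h}) \Rightarrow \Psi\;\quot{h\:\quot{h}} = \Psi\;\quot{\Gamma}.
\end{displaymath}
The first step unwinds the two $\const{S}$/$\const{K}$/$\const{I}$ redexes exactly as before; the second is the diagonaliser step $f\;\quot{h} \Rightarrow \quot{h\:\quot{h}}$, licensed because $h\;\quot{h}$ is well typed, which is precisely the side condition on the $\const{app}$ rule. Thus $\Gamma \Rightarrow^{*} \Psi\;\quot{\Gamma}$ purely by the operational rules of Figure~\ref{reductions}.

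The one place where the hypothesis that we work in a conventional higher-order logic built on CR is used is the final move from reduction to provable equality: I would invoke the assumption, standing throughout the paper, that the logic's inference rules contain the reduction relation (so that each CR reduction step is a valid equational deduction and equality is a congruence). Under this assumption the computation $\Gamma \Rightarrow^{*} \Psi\;\quot{\Gamma}$ yields $\vdash \Gamma = \Psi\;\quot{\Gamma}$, as required. I expect the main difficulty to be presentational rather than mathematical: aligning the type subscripts on $h$ so that $h\;\quot{h}$ typechecks (thereby discharging the $\const{app}$ side condition), and stating precisely which feature of the ambient logic promotes $\Rightarrow$ to $\vdash$.
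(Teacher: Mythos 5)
Your proposal is correct and takes essentially the same route as the paper: your $h$ is precisely the paper's $\beta$ (the explicit combinatory form of $\lambda x.\,\Psi(f\;x)$, built by analogy with $g$ in Theorem~\ref{confluence}), your $\Gamma = h\;\quot{h}$ is the paper's $\beta\;\quot{\beta}$, and the reduction-to-provable-equality chain is identical. The only difference is presentational: you spell out the type subscripts, the $\const{app}$ side condition, and the promotion of $\Rightarrow$ to $\vdash$ explicitly, all of which the paper leaves implicit.
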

	
\begin{proof}
Suppose $\Psi : \term \rightarrow \bool$.  Define $\beta$ of type $\term
\rightarrow \bool$ such that for any $e$ of type $\term$, $\beta \; e
\Rightarrow \Psi (f\;e)$, where $f =\const{S}_{\sterm,\sterm,\sterm} \; \const{app}\; \const{lift}$,
as in the proof of Theorem~\ref{confluence}.  In full \reFLect, $\beta$ would just be $\lambda
x.\;\Psi(f\;x)$, where $x$ is chosen not to occur free in $\Psi$. We now
let $\Gamma$ be $\beta\;\quot{\beta}$ and prove $\vdash \Gamma = \Psi \;
\quot{\Gamma}$ in our assumed higher order logic:
\[
\Gamma\; \begin{array}[t]{@{}cl@{\qquad\quad-}l}
= & \beta\;\quot{\beta} & 
  \textrm{definition of $\Gamma$}\\
= & \Psi(f\;\quot{\beta})&
  \textrm{reduction of $\beta\;\quot{\beta}$}\\
= & \Psi\;\quot{\beta\;\quot{\beta}} &
  \textrm{reduction of $f\;\quot{\beta}$}\\ 
= & \Psi\;\quot{\Gamma} & 
  \textrm{definition of $\Gamma$}
\end{array}\]

\end{proof}

\begin{corollary}\label{inconsistent} 
Any conventional higher order logic built on CR is inconsistent.
\end{corollary}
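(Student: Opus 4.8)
The plan is to turn the fixed-point equation supplied by Theorem~\ref{diag} into a liar sentence, by instantiating $\Psi$ with a term that composes logical negation with the evaluation of a quotation. Concretely, I would take $\Psi$ to be the term $\lambda x.\,\neg(\const{value}_\bool\;x)$ of type $\term\rightarrow\bool$ (equivalently, in the combinatory fragment, $\const{S}\,(\const{K}\,\neg)\,\const{value}_\bool$, using the negation $\neg:\bool\rightarrow\bool$ that any conventional higher-order logic provides). This $\Psi$ is visibly typeable with type $\term\rightarrow\bool$ by the rules of Figure~\ref{typing} together with the logical constant $\neg$. Applying Theorem~\ref{diag} to it yields a typeable $\Gamma$ with $\vdash\Gamma=\Psi\,\quot{\Gamma}=\neg(\const{value}_\bool\;\quot{\Gamma})$.

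The next step is to discharge the inner $\const{value}_\bool$. Since the equation $\vdash\Gamma=\Psi\,\quot{\Gamma}$ is well typed and $\Psi\,\quot{\Gamma}:\bool$, the term $\Gamma$ must itself have type $\bool$; hence the reduction rule for $\const{value}$ in Figure~\ref{reductions} applies and gives $\const{value}_\bool\;\quot{\Gamma}\Rightarrow\Gamma$. Because the logic validates reductions as provable equalities---exactly the principle already used in the proof of Theorem~\ref{diag} to rewrite $\beta\,\quot{\beta}$ and $f\,\quot{\beta}$---we obtain $\vdash\const{value}_\bool\;\quot{\Gamma}=\Gamma$. Substituting this into the fixed-point equation collapses it to $\vdash\Gamma=\neg\Gamma$, i.e.\ a proof of $\Gamma\leftrightarrow\neg\Gamma$ for a sentence $\Gamma$ of type $\bool$. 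Finally I would observe that $\Gamma=\neg\Gamma$ is refutable in any conventional higher-order logic: a short propositional derivation (case-splitting on the boolean $\Gamma$) yields a proof of falsity, from which every formula follows, so the logic is inconsistent.

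I expect the only real subtlety to lie in the second step rather than in the propositional endgame: one must be sure that $\const{value}_\bool$ can legitimately be used to ``unquote'' $\Gamma$ inside a provable equation. This is what forces the diagonal equation to speak about the \emph{truth value} of $\Gamma$ itself rather than merely about its syntax tree, and it is precisely the interaction of $\const{value}$ with quotation---the same feature that broke normalization in Theorem~\ref{confluence}---that makes the liar available. In effect $\const{value}_\bool$ plays the role of a definable truth predicate, which is exactly what Tarski's theorem forbids. Given Theorem~\ref{diag} and the alignment of reduction with logical equality assumed throughout, no further machinery is needed.
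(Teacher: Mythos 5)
Your proposal is correct and follows essentially the same route as the paper: instantiate $\Psi$ in Theorem~\ref{diag} with the ``is false'' predicate $\neg \circ \const{value}_{\sbool}$ (the paper's $\const{isfalse}$, given there as $\const{S}\;(\const{K}\;\neg)\;(\const{S}\;(\const{K}\;\const{value}_{\sbool})\;\const{I})$, your $\const{S}\,(\const{K}\,\neg)\,\const{value}_{\sbool}$ being an equivalent combinatory rendering), then use the $\const{value}$ reduction on $\quot{\Gamma}$ to collapse the fixed-point equation into a liar sentence. The only cosmetic difference is that the paper states the contradiction as $\const{value}_{\sbool}\;\quot{\Gamma} = \neg\,(\const{value}_{\sbool}\;\quot{\Gamma})$ while you state it as $\Gamma = \neg\Gamma$; these are interderivable via the same equation $\const{value}_{\sbool}\;\quot{\Gamma} = \Gamma$ that both proofs rely on.
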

	
\begin{proof} 
The proof is straightforward and we sketch it here. As discussed, CR
contains a truth predicate, $\const{value}_{\sbool}$. We then have a
falsity predicate \const{isfalse} defined to be
$\const{S}\;(\const{K}\;\neg)\;(\const{S}\;(\const{K}\;\const{value}_{\sbool})\;\const{I})$.
Taking $\Psi$ in Theorem~\ref{diag} to be   $\const{isfalse}$, we conclude that any 
logic built on CR can prove $\vdash \Gamma = \const{isfalse} \;
\quot{\Gamma}$. But then
$\const{value}_{\sbool} \; \quot{\Gamma} = \Gamma = \const{isfalse} \; \quot{\Gamma} = \neg \; (\const{value}_{\sbool} \; \quot{\Gamma})$, a contradiction.
\end{proof}
	
\section{Stratified Typing for Terms}\label{SCR}

Analysis of the above results for CR reveals that its fundamental flaw is
that all quotations have a single type, $\term$. This allows the
circularities that prevent normalization and make it logically
inconsistent.  We now sketch a variant, \emph{Stratified Combinatory
  \ReFLect} (SCR), that avoids these pitfalls.  The basic idea is simple:
the type of a quotation will carry with it the type of the term inside.
	
The syntax and reduction semantics of SCR are essentially the same as
in CR. The difference is in the types.  The types of SCR
terms are defined by
\begin{displaymath}
\sigma\;\; {:}{=}\;\; \unit \; \mid \;  \pterm{\sigma} \; \mid \; \sigma_1 \rightarrow \sigma_2
\end{displaymath} 
	
\noindent and the typing rules for the reflective combinators and
quotations are
\vskip-4mm
\begin{displaymath}
\begin{array}{@{}c@{}}
\tree{}{$\const{value}_\sigma: \pterm{\sigma} \rightarrow \sigma$}
 \qquad
\tree{}{$\const{lift}_\sigma: \pterm{\sigma} \rightarrow \pterm{\pterm{\sigma}}$}\\[1.5mm]
\tree{}{$\const{app}: \pterm{\sigma \rightarrow \tau} \rightarrow 
    \pterm{\sigma} \rightarrow \pterm{\tau}$}\qquad
\tree{$e : \sigma$}{$\quot{e} : \pterm{\sigma}$}
\end{array}
\end{displaymath}

\noindent The other combinators and applications are typed as they are in CR.

This new typing schemes rules out applying a function to its quoted
self. It bans the self-application used to define the nonterminating
`$g\:\quot{g}$' in Theorem~\ref{confluence}, and it rules out the `$\beta\;
\quot{\beta}$' construction used in the proof of Theorem~\ref{diag}.

\section{Conclusions and Discussion}

This paper has explored the semantics of a reflective combinatory logic
that shares key features with \reFLect, a functional language intended to
unify computation and deduction in a practical engineering setting. Our
analysis suggests that the type of quotations in \reFLect must be
parameterized by the type of their contents.  We speculate that, with this
adjustment to the type system, a set-theoretic semantics of full \reFLect,
including reflection, will be possible.

To make \ReFLect attractive to verification engineers, it has a simple
Hindley-Milner type system. This means quotations cannot have types of the
form $(\sigma)\term$ without making the definitions of certain common
functions over terms untypeable. For example, we cannot define
$\const{operand}\;\quot{e_1\; e_2} = \quot{e_1}$, since this function would
have to have an existential type $(\alpha)\term \rightarrow \exists
\beta.\:(\beta\rightarrow\alpha)\term$.  Also problematic are functions
defined recursively over the syntax of terms, which are ubiquitous in
theorem-prover code.

Some developments in functional programming, subsequent to the design of
\reFLect, offer a way forward. \emph{Generalized algebraic data types}
(GADTs) are a generalization of standard algebraic data types that take a
modest step towards dependent types~\cite{DBLP:conf/icfp/JonesVWW06}.
GADTs allow for algebraic data type constructors to have parameters that
can be instantiated to specific types within the body of a function defined
over values of the type. We speculate that one could treat quotations and
the type $\pterm{\sigma}$ as a GADT in \reFLect, with the aim of making it
possible to define the term-traversing functions needed to implement a
theorem prover.

\ReFLect currently includes neither GADTs nor $\pterm{\sigma}$.  A major
project for the future is to develop a full reflective language with GADTs
and a theorem prover based on the semantic insights in this paper.  A clear
first step is to investigate whether GADTs, as found for example in
Haskell, are indeed sufficient develop a theorem prover with a
parameterized type $\pterm{\sigma}$ of terms. We have done an inital
investigation of this, and the answer seems to be `partly'; the main
difficulty seems to be finding a satisfactory treatment of polymorphism.

We are of course aware that there are functional languges and logical
calculi with more flexible type systems, among them System F, Coq, and
HOL-Omega. But in this industrially-motivated work we have been exploring
options that give practicing engineers as simple and intuitive a functional
programming language as possible, and so aim to remain close to the
Hindley-Milner type system.

More generally, fast object-language evaluation has been a goal of theorem
prover designers since the earliest days. Approaches include term data
structures that optimise symbolic evaluation by proof, and extraction of
programs in standard functional languages~\cite{Coq}. The work presented
here is distinguished in attempting direct unification of object- and
meta-languages, of symbolic reasoning and direct program execution.

\medskip

\noindent\textit{Acknowledgment} This work leans heavily on a preliminary
sketch of the semantics of full \reFLect devised in collaboration with Jim
Grundy and Sava Krsti\'{c}~\cite{Grundy07}

\bibliographystyle{abbrv}
\bibliography{paper} 

\end{document}